 \newcommand{\bR}{\mathbb{R}}
\newcommand{\bT}{{\bf T}}
   \newcommand{\cE}{\mathcal{E}}
 \newcommand{\pd}{\partial}
\newcommand{\be}{\begin{equation}}
\newcommand{\ee}{\end{equation}}
\newcommand{\bea}{\begin{eqnarray}}
\newcommand{\eea}{\end{eqnarray}}
\newcommand{\ben}{\begin{eqnarray*}}
\newcommand{\een}{\end{eqnarray*}}
\newtheorem{cor}{Corollary}[section]
 \newtheorem{thm}[cor]{Theorem}
\theoremstyle{remark}
\definecolor{A}{rgb}{.75,1,.75}
\definecolor{green}{rgb}{0,1,0}
\definecolor{yellow}{rgb}{1,1,0}
\definecolor{orange}{rgb}{1,.7,0}
\definecolor{red}{rgb}{1,0,0}
\definecolor{white}{rgb}{1,1,1}
\begin{document}
\title
{Information Theory and Statistical Mechanics Revisited}

\author{Jian Zhou}
\address{Department of Mathematical Sciences\\Tsinghua University\\Beijng, 100084, China}
\email{jzhou@math.tsinghua.edu.cn}

\begin{abstract}
We derive Bose-Einstein statistics and Fermi-Dirac statistics
by Principle of Maximum Entropy  applied to two families of entropy functions
different from the Boltzmann-Gibbs-Shannon entropy.
These entropy functions are identified with special cases of modified Naudts' $\phi$-entropy.
\end{abstract}

\maketitle

In a pioneering work sixty years ago,
Jaynes \cite{Jaynes} initiated the study of statistical physics
by information theory.
He expounded the Principle of Maximum Entropy and applied it
to Boltzmann-Gibbs-Shannon entropy to derive Boltzmann-Gibbs statisitics.
Many different entropy functionals have been introduced, studied and used in
many areas where statistics have been applied.
But in physics,
only the Boltzmann-Gibbs-Shanon entropy has been considered as ``physical".
Tsallis \cite{Tsallis} has proposed the q-entropy  as
another kind of physical entropy functional.
Naudts \cite{Naudts} introduced $\phi$-exponentials, $\phi$-logarithms
and $\phi$-entropy as a further generalization related to the $q$-entropy.
On \cite[p. 94]{Naudts} he wrote:``One of the conclusions is that the q-deformed exponential family occurs in
a natural way within the context of classical mechanics. The more abstract
generalisations discussed in the final chapters may seem less important from a
physics point of view. But they have been helpful in elucidating the structure
of the theory of generalised exponential families."
We will show that such generalizations are indeed of physical importance,
in particular,
in understanding the Bose-Einstein statistics and the Fermi-Dirac statistics
by Principle of Maximum Entropy.

We will first consider the subjective statistical physics of free bosons and free fermions.
More precisely,
we will derive Bose-Einstein statistics and Fermi-Dirac statistics
from the Principle of Maximum Entropy,
not for the Boltzmann-Gibbs-Shanon entropy as Jaynes did for the Boltzmann-Gibbs statistics,
but instead for two different entropy functions.
We quote here Tsallis \cite[p. 4]{Tsallis}: ``Indeed,
the physically important entropy - a crucial concept - is {\em not} thought as being an universal
functional that is given once for ever, but it rather is a delicate and powerful concept
to be carefully constructed for classes of systems."
We will actually consider two different families interpolating the Bose-Einstein
and Fermi-Dirac   statistics.
They give us two families of entropy functions.

Next we will present a unified understanding of the Boltzmann-Gibbs weight function,
the Bose-Einstein weight function and the Fermi-Dirac weight function
from the point of view of natural parameters of
exponential families.

Finally,
the three entropy functions and the three weight functions
are unified in terms of special cases of the generalized logarithmic functions
and generalized exponential functions developed by Naudts, respectively.
Some modifications are introduced for this purpose.

We also briefly treat the case of fractional exclusion statistics \cite{Haldane, Wu}.
In a subsequent work we will treat the case of general statistics
interpolating the Bose-Einstein and Fermi-Dirac statistics.

\section{Unified Derivation of Boltzmann-Gibbs Statistics, Bose-Einstein Statistics and Fermi-Dirac Statistics from Principle of Maximum Entropy}

In this section we will introduce the statistical manifolds
that describe a single particle, in finitely many states.
Then we will  use suitable entropy functions on these manifolds
and the Principle of Maximum Entropy to give a unified derivation of
three important physical statistics that describe noninteracting particles.

\subsection{The statistical manifold}

Suppose that one is performing a test with one observable $\cE$ with finitely many outcomes
$\{E_1, \dots, E_n\}$,
$E_1 < \cdots < E_n$.
Suppose that each outcome has a positive probability $p_i$ of appearance:
\be
p(\cE = E_i) = p_i, \;\;\; p_i > 0, \;\; i =1, \dots, n;
\ee
these probabilities are required to summed up to one:
\be \label{eqn:Constraint}
p_1 + \cdots + p_n = 1.
\ee
Such a distribution is called a {\em categorical distributions} in statistics.
Putting all the possible probability distributions together,
one gets an open $(n-1)$-simplex:
\be
P_n = \{(p_1, \dots, p_n) \in \bR^n\;|\; p_1 + \cdots + p_n =1, \; p_i > 0, \; i=1, \dots, n\}.
\ee
It is an open $(n-1)$-dimensional manifold.
We will take $p_1, \dots, p_{n-1}$ as coordinates on $P_n$,
and express $p_n$ as a function in these coordinates:
\be
p_n = 1- p_1 - \cdots - p_{n-1}.
\ee
As in Jaynes \cite{Jaynes}
one actually works with some submanifolds of $P_n$,
denoted by $P_n(E_1, \dots, E_n; E)$ and defined by the following constraint:
\be \label{eqn:EnergyConst}
p_1 E_1 + \cdots + p_n E_n = E,
\ee
where $E$ satisfies
\be
E_1 = \min\{E_1, \dots, E_n\} \leq E \leq \max\{E_1, \dots, E_n\} = E_n.
\ee

\subsection{The entropy functions}
\label{sec:Entropy}

Recall the Boltzmann-Gibbs-Shannon entropy function is defined by:
\be
H_{BGS} = - \sum_{i=1}^n p_i \log p_i.
\ee
We now introduce the following two entropy functions:
\bea
&& H_{BE} = \sum_{i=1}^n ((p_i +1)\ln( p_i+1) - p_i \ln p_i), \\
&& H_{FD} = \sum_{i=1}^n  ((p_i-1)\ln(1- p_i) - p_i \ln p_i).
\eea
The motivation for the introduction of these functions will be elaborated elsewhere.
Here let us just say $H_{BE}$ can be thought of as a discrete version
of \cite[(14)]{Yang-Yang}.

We also introduce a family of entropy functions:
\be
H_{\epsilon} = \sum_{i=1}^n (\frac{1}{\epsilon}(1  +\epsilon p_i)\ln(1+\epsilon p_i) - p_i \ln p_i).
\ee
Then we have
\begin{align}
H_1 & = H_{BE}, & H_0 & = H_{BGS} + 1, & H_{-1} & = H_{FD}.
\end{align}

\subsection{Principle of Maximum Entropy}
\label{sec:PME}

We now show that the application of Principle of Maximum Entropy
to the three entropy functions in last subsection on the
statistical manifold $P_n(E_1, \dots, E_n)$ gives us a unified
derivation of the Boltzmann-Gibbs statistics,
Bose-Einstein statistics, Fermi-Dirac statistics,
and more generally,
Acharya-Swamy statistics \cite{Ach-Swa}.

\begin{thm}
On $P_n(E_1, \dots, E_n; E)$,
the entropy function $H_{\epsilon}$ achieves its maximum at
the points:
\be \label{eqn:BE}
p_i(\epsilon) = \frac{1}{e^{a+bE_i}-\epsilon}, \;\; i=1, \dots, n
\ee
for some constants $a$ and $b$.
\end{thm}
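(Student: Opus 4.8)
The plan is to treat this as a constrained optimization problem and to solve it by the method of Lagrange multipliers, the two constraints being the normalization \eqref{eqn:Constraint} and the energy condition \eqref{eqn:EnergyConst}. Since $H_\epsilon$ is a separable sum $H_\epsilon = \sum_{i=1}^n h_\epsilon(p_i)$ with $h_\epsilon(p) = \frac{1}{\epsilon}(1+\epsilon p)\ln(1+\epsilon p) - p\ln p$, the first step is the single-variable computation $h_\epsilon'(p) = \ln(1+\epsilon p) - \ln p = \ln\frac{1+\epsilon p}{p}$, in which the two additive $+1$ terms coming from the logarithmic derivatives conveniently cancel. This clean form already points toward the stated answer.

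Next I would introduce multipliers $a$ and $b$ and write the stationarity condition at an interior critical point as $h_\epsilon'(p_i) = a + b E_i$ for every $i$, i.e. $\frac{1+\epsilon p_i}{p_i} = e^{a+bE_i}$. Solving this single scalar relation for $p_i$ yields $1/p_i = e^{a+bE_i}-\epsilon$, which is exactly \eqref{eqn:BE}. The constants $a$ and $b$ are then pinned down implicitly by substituting this expression back into the two constraints; I would not attempt to solve for them in closed form, since the theorem asserts only their existence.

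To promote this critical point to a genuine maximum I would invoke concavity. Differentiating once more gives $h_\epsilon''(p) = \frac{\epsilon}{1+\epsilon p} - \frac{1}{p} = \frac{-1}{p(1+\epsilon p)}$, which is strictly negative on the relevant domain where $p>0$ and $1+\epsilon p>0$, a condition automatic on the open simplex for $\epsilon \geq -1$. Hence the Hessian of $H_\epsilon$ is diagonal and negative definite, so $H_\epsilon$ is strictly concave; its restriction to the affine constraint set \eqref{eqn:EnergyConst} is therefore strictly concave as well, and any critical point is the unique global maximizer.

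The main obstacle, and the only point requiring genuine care, is the existence and interiority of the maximizer, which is precisely what legitimizes applying Lagrange multipliers in the interior. Here I would use that the closure of $P_n(E_1,\dots,E_n;E)$, namely the intersection of the closed simplex with the hyperplane \eqref{eqn:EnergyConst}, is compact and nonempty exactly because $E_1 \leq E \leq E_n$, that $H_\epsilon$ extends continuously to this closure under the convention $0\ln 0 = 0$, and that $h_\epsilon'(p) \to +\infty$ as $p \to 0^+$. This last fact drives the maximum off every face $p_i=0$ of the simplex, placing it in the relative interior, where the stationarity analysis above applies and delivers the form \eqref{eqn:BE}, completing the proof.
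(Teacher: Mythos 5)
Your proposal is correct and follows essentially the same route as the paper: Lagrange multipliers with the two constraints, the computation $\frac{\pd}{\pd p_i}\bigl(H_\epsilon\bigr) = \ln\frac{1+\epsilon p_i}{p_i}$ leading to \eqref{eqn:BE}, and negativity of the diagonal Hessian $-\delta_{ij}\frac{1}{p_i(1+\epsilon p_i)}$ to conclude a maximum. Your final paragraph on compactness of the closure and the blow-up of $h_\epsilon'$ near $p_i = 0$ actually supplies an existence-and-interiority argument that the paper's proof leaves implicit, so it is a welcome strengthening rather than a deviation.
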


\begin{proof}
By the method of Lagrange multiplier,
consider the function
\ben
F_{\epsilon} & = &  \sum_{i=1}^n (\frac{1}{\epsilon}(1+ \epsilon p_i)\ln(1 + \epsilon p_i) - p_i \ln p_i)\\
& + & a (1- \sum_{i=1}^n p_i) + b (E - \sum_{i=1}^n p_i E_i).
\een
One easily gets:
\be
\frac{\pd F_{\epsilon}}{\pd p_i} = \ln \frac{1 + \epsilon p_i}{p_i} - a  - b E_i, \;\; i=1, \dots, n,
\ee
so the critical point where $\frac{\pd F_{\epsilon}}{\pd p_i} = 0$ for all $i=1, \dots, n$
is given by
\be
p_i(\epsilon) = \frac{1}{e^{a+bE_i} - \epsilon}.
\ee
The entries of Hessian matrix of $F_{\epsilon}$ are given by:
\be
\frac{\pd^2F_{\epsilon}}{\pd p_i\pd p_j}
= -\delta_{ij} \frac{1}{p_i(1 +\epsilon p_i)},
\ee
the Hessian is clearly negatively definite.
\end{proof}

\section{Weight Functions as Inverse Functions of Natural Parameters of Exponential Families}

In this section we understand the weight functions:
\be
p_{\epsilon}(E) = \frac{1}{e^{a+b E} - \epsilon}
\ee
as inverse functions of natural parameters of exponential families.

\subsection{Exponential family}

In statistics,
an {\em exponential family} of probability densities is an n-dimensional model $S =
\{p_\theta \}$  of the form
\be \label{eqn:Exponential}
p(x; \theta) = \exp \biggl[C(x)+ \sum_{i=1}^n \theta^i T_i(x) - \psi(\theta) \biggr].
\ee
The parameters $\{\theta^i\}$ are called the {\em natural  parameters},
and the function $\psi(\theta)$ is determined by the normalization condition
\be
\int p(x; \theta) dx  = 1,
\ee
and so it is given by:
\be
\psi(\theta)
= \log \int \exp \biggl[C(x)+ \sum_{i=1}^n \theta^i T_i(x)  \biggr].
\ee

Recall when Gibbs \cite{Gibbs}
introduced the canonical ensemble in 1901
he postulated a distribution  of the form
\be
p(E) = \exp(G - \beta E)
\ee
where $G$ is a normalization constant and
where the control parameter $\beta = \frac{1}{kT}$ is the
inverse temperature.
This is an example of exponential family.

\subsection{Boltzmann-Gibbs weight functions as inverse natural parameters for the categorical distribution}

The categorical distribution is an example
of the exponential families.
First one can rewrite it in the following form:
\be
p= p_1^{[\cE=E_1]} \cdots p_n^{[\cE=E_n]},
\ee
where $[\cE = E_i]$ is the indicating function
that
equals to one when the energy level is $E_i$, zero otherwise.
Note
\be
\log p = \sum_{i=1}^n [\cE=E_i]  \ln p_i
= \sum_{i=1}^n [\cE=E_i] \cdot \ln p_i.
\ee
So by comparing with \eqref{eqn:Exponential},
one can take $T_i = [\cE = E_i]$,
and
the natural parameters can be taken to be:
\be
\eta_i = \ln p_i,
\ee
and so
\be
p_i = e^{\eta_i}.
\ee
This gives us the Boltzmann-Gibbs weight function
when we take $\eta_i = -(a+bE_i)$.

\subsection{Fermi-Dirac weight function as inverse natural parameter for the Bernoulli distribution}

The Fermi-Dirac weight function can be interpreted as the inverse function of natural
parameter of the Bernoulli distribution.
By Pauli's Exclusion Principle,
the outcome for observing a free fermion at a fixed state is like  the toss of coins,
it can be only be $0$ or $1$ particle at this state.
Suppose the probability is given by:
\begin{align}
p(X=1) & = p, &
p(X=0) & = 1- p.
\end{align}
The distribution can be written as
\be
P(X=x) = p^x (1-p)^{1-x}.
\ee
This is called the {\em Bernoulli distribution} in statistics.
This is also an example of exponential families:
\be
\log P = x \ln \frac{p}{1-p} + \ln (1-p).
\ee
The natural parameter is given by:
\be
\eta = \ln \frac{p}{1-p},
\ee
and so the inverse function is given by:
\be
p = \frac{1}{e^{-\eta} + 1}.
\ee
This is the Fermi-Dirac weight function
when we take $\eta = -(a+bE)$.

\subsection{The Acharya-Swamy weight function for $\epsilon < 0$ as inverse natural parameter for the Bernoulli distribution}

In the case of $\epsilon < 0$,
consider the probability distribution given by:
\be
P(X=x) 
= \biggl(\frac{p}{1+(1+\epsilon)p} \biggr)^x
\cdot \biggl( \frac{1 + \epsilon p}{1+(1+\epsilon)p}\biggr)^{1-x},
\ee
supported on the set $\{0,1\}$.
This is a curved Bernoulli distribution.
This is also an example of exponential families:
\be
\log P = x \ln \frac{p}{1+\epsilon p} + \ln \biggl( \frac{1 + \epsilon p}{1+(1+\epsilon)p}\biggr).
\ee
The natural parameter is given by:
\be
\eta = \ln \frac{p}{1+ \epsilon p},
\ee
and so the inverse function is given by:
\be
p = \frac{1}{e^{-\eta} - \epsilon}.
\ee
This gives the Acharya-Swamy weight function for $\epsilon < 0$
when we take $\eta = -(a+bE)$.

\subsection{The Bose-Einstein weight function as inverse natural parameter of the geometric distribution}

Similarly,
the Bose-Einstein weight function can be interpreted as the inverse function of natural
parameter of the geometric distribution.
The number of a free boson at a fixed state can be any nonnegative integer $n \geq 0$.
Suppose the probability is given by:
\be
P(X=n) = \frac{p^n}{(p+1)^{n+1}}, n =0, 1, 2, \dots.
\ee
This is called the {\em geometric distribution} in statistics.
Since
\be
\ln P(X =x) = \ln \frac{p^x}{(p+1)^{x+1}}
= x \ln \frac{p}{p + 1} + \ln \frac{1}{p+1},
\ee
one sees that it is an exponential family with natural parameter:
\be
\eta = \ln \frac{p}{p+1},
\ee
with inverse function:
\be
p = \frac{1}{e^{-\eta} - 1}.
\ee
This is the Bose-Einstein weight function
when we take $\eta = -(a+bE)$.

\subsection{The Acharya-Swamy weight function for $\epsilon > 0$ as inverse natural parameter for the geometric distribution}

Consider the probability distribution given by:
\be
P(X=n) = \frac{(p/(1+(\epsilon-1)p))^n}{((1 + \epsilon p)/(1+(\epsilon-1)p))^{n+1}}, n =0, 1, 2, \dots.
\ee
This is  a curved geometric distribution.
Since
\be
\ln P(X =x)
= x \ln \frac{p}{1+ \epsilon p} + \ln \frac{1+(\epsilon-1) p}{1+ \epsilon p},
\ee
one sees that it is an exponential family with natural parameter:
\be
\eta = \ln \frac{p}{1+ \epsilon p},
\ee
with inverse function:
\be
p = \frac{1}{e^{-\eta} - \epsilon}.
\ee
This is the Acharya-Swamy weight function for $\epsilon > 0$
when we take $\eta = -(a+bE)$.

\section{Bose-Einstein Statistics and Fermi-Dirac Statistics
as Generalized Statistical Physics}

The discussions of exponential families in last section serve as a psychological
vehicle that takes us to the notion of generalized exponential families
developed by Naudts \cite{Naudts},
which generalizes the $q$-exponential families of Tsallis \cite{Tsallis}.
We first recall the $\phi$-logarithm function,
the $\phi$-exponential function and the $\phi$-entropy function,
then we use their suitable modifications to study $H_\epsilon$ and $p_\epsilon$.

\subsection{The $\phi$-logarithm}

Fix a strictly positive non-decreasing function $\phi(u)$, defined on the positive
numbers $(0,+\infty)$.
It can be used to define a deformed logarithm by
\be
\ln_\phi(u) = \int_1^u dv \frac{1}{\phi(v)}, \;\;\; u > 0.
\ee
It satisfies $\ln_\phi(1) = 0$ and
\be
\frac{d}{du} \ln_\phi(u) = \frac{1}{\phi(u)}.
\ee
The natural logarithm is obtained with $\phi(u) = u$,
The Tsallis $q$-logarithm  is obtained with $\phi(u) = u^q$ for $q > 0$.

\subsection{The $\phi$-exponential function}

The inverse of the function $\ln_\phi(x)$ is called the
$\phi$-exponential and is denoted $\exp_\phi(x)$.
It can be written in terms of a function $\psi$ on $\bR$ defined by:
\be
\psi(u) = \begin{cases}
\phi( \exp_\phi(u)), & \text{if $u$ is in the range of $\ln_\phi$}, \\
0, &  \text{if $u$ is too small}, \\
+\infty, & \text{if $u$ is too large}.
\end{cases}
\ee
Clearly is $\phi(u) = \psi(\ln_\phi(u))$ for all $u > 0$.
Then $\exp_\phi$ is defined by:
\be
\exp_\phi(u) = 1 + \int_0^u dv \psi(v).
\ee
It is clear that $\exp_\phi(0) = 1$ and
\be
\frac{d}{du}\exp_\phi(u) = \psi(u).
\ee

\subsection{Deduced Logarithms}

The deduced logarithm is defined by
\be
\omega_\phi(u) = u \int_0^{1/u} dv \frac{v}{\phi(v)} - \int_0^1 dv \frac{v}{\phi(v)}
- \ln_\phi \frac{1}{u}.
\ee
It satisfies $\omega_\phi(1) = 0$ and that
\be
\frac{d}{du}
\omega_\phi (u)  =  \int_0^{1/u} dv \frac{v}{\phi(v)}.
\ee
Introduce a function:
\be
\chi(u) = \biggl[ \int_0^{1/u} dv \frac{v}{\phi(v)}\biggr]^{-1},
\ee
so one can see that the deduced logarithmic function is  the $\chi$-logarithm function:
\be
\omega_\phi (u) = \ln_\chi(u).
\ee

\subsection{The $\phi$-entropy}
The $\phi$-entropy is defined by \cite{Naudts}:
\be
H_\phi(p) = \sum_{i=1}^n  p_i \ln_\chi(1/p_i),
\ee
After a short calculation:
\be
H_\phi(p)
= - \sum_{i=1}^n  p_i \int_1^{p_i}
\frac{1}{v^2} \biggl[\int_0^{v} du \frac{u}{\phi(u)} \biggr] d v.
\ee

For our purpose,
we will define
\be
\tilde{H}_\phi(p)
= \sum_{i=1}^n  p_i \int_{p_i}^{+\infty}
\frac{1}{v^2} \biggl[\int_0^{v} du \frac{u}{\phi(u)} \biggr] d v
\ee
in order to remove some irrelevant constants.
We will call this the {\em modified $\phi$-entropy}.

\subsection{The entropy function $H_{BE}$ and $H_{FD}$
as modified $\phi$-entropy}

Define the following family of functions parameterized by $\epsilon$:
\be
\phi_\epsilon(p) = p(1+ \epsilon p).
\ee
We have
\ben
\tilde{H}_{\phi_\epsilon}(p)
& = & \sum_{i=1}^n p_i \int_{p_i}^{\infty} \frac{1}{v^2}
\biggl[\int_0^{v} du \frac{u}{u(1+\epsilon u)} \biggr] d v \\
& = & \sum_{i=1}^n p_i \int_{p_i}^\infty \frac{1}{\epsilon v^2} \ln |1 + \epsilon v|  d v \\
& = & \sum_{i=1}^n \biggl(\frac{1 + \epsilon p_i}{\epsilon} \ln (1 + \epsilon p_i)
- p_i \ln p_i \biggr) \\
& = & H_\epsilon.
\een
In particular, the entropy functions $H_{BE}$, $H_{BGS}+1$ and $H_{FD}$ are the modified
$\phi_{\epsilon}$-entropy functions for $\epsilon = +1$, $0$ and $-1$ respectively.

\subsection{Bose-Einstein weight function and Fermi-Dirac weight function as
modified $\phi$-exponential function}

Similarly,
we defined the modified $\phi$-logarithm function by:
\be
\widetilde{\ln}_\phi(u) = - \int_u^\infty dv \frac{1}{\phi(v)}, \;\;\; u > 0,
\ee
and define the modified $\phi$-exponential function $\widetilde{\exp}_\phi$ as its inverse function.

For the function
$\phi_\epsilon(p) = p(1+ \epsilon p)$,
we have
\be
\widetilde{\ln}_{\phi_\epsilon}(u) = - \int_u^\infty dv \frac{1}{v(1+ \epsilon v)}
= \ln|1+\epsilon u| -\ln(u).
\ee
It follows that
\be
u = \widetilde{\exp}_{\phi_\epsilon}(\eta) = \frac{1}{e^{\eta} - \epsilon},
\ee
and so we have
\be
p_\epsilon(E) = \widetilde{\exp}_{\phi_\epsilon}(a+ bE).
\ee

\section{Fractional Exclusion Statistics}

In this section,
we treat the case of fractional exclusion statistics of Haldane \cite{Haldane}.
We refer to \cite[Chapter 5]{Khare} for backgrounds.
Since the ideas are similar,
we will be very brief.

Wu \cite{Wu} has derived the following formula for the weight function:
\be
p(g) = \frac{1}{\omega(\eta) + g},
\ee
where the function $\omega(\eta)$ satisfies the functional equation:
\be
\omega(\eta)^g(1+\omega(\eta))^{1-g} = e^{\eta}.
\ee
For the special cases of $g = 0$ and $1$ we have $w(\eta) = e^{-\eta} - 1$
and $w(\eta) = e^{-\eta}$, and so we  recover the Bose-Einstein and the
Fermi-Dirac statistics respectively for $\eta = - (a + b E)$.
This weight function can be derived by maximizing the following family of entropy functions
parameterized by $g$:
\be
H_g(p) = (1+(1-g)p)\ln(1+(1-g)p) - (1-gp)\ln (1-gp) - p \ln p,
\ee
under the constraints \eqref{eqn:Constraint} and \eqref{eqn:EnergyConst}.
This is because
\be
\frac{\pd H_g}{\pd p} = \ln \frac{(1 + (1-g) p)^{1-g} (1- gp)^g}{p},
\ee
and so by the method of  Lagrange multiplier one can get:
\be
\frac{(1 + (1-g) p)^{1-g} (1- gp)^g}{p} = e^{-(a + b E)}.
\ee
One can readily check that
\bea
&& H_g = \tilde{H}_{\phi_g}(p), \\
&& p_g(E) = \widetilde{\exp}_{\phi_g}(a+ bE),
\eea
for the following function:
\be
\phi_g(p) = p(1-gp)(1+(1-g)p).
\ee

\section{Conclusions and Prospects}

In this paper we have generalized Jaynes' derivation of Boltzmann-Gibbs statistics
by the Principle of Maximum Entropy.
A family $H_\epsilon$ of entropy functions has been introduced to
give a unified derivation of Bose-Einstein, Boltzmann-Gibbs and  Fermi-Dirac statistics
together with the interpolating Acharya-Swamy statistics.
The family $H_\epsilon$ turns out to be a special case of Naudts' $\phi$-entropy
and the probabilities are $\phi$-exponentials,
with suitable modifications,
for $\phi$ given by
$\phi_\epsilon(p) = p + \epsilon p^2$.
A different interpolation of Bose-Einstein and  Fermi-Dirac statistics
is given by the $\phi_g$-exponential function and the corresponding
entropy function is given by the $\phi_g$-logarithm function,
for $\phi_g(p) = p(1-gp)(1+(1-g)p)$.
The two series of functions $\phi_\epsilon$ and $\phi_g$ suggest us
to study  more general deformation of $\phi=p$ given by
$
\phi_\bT(p) = p - \sum_{n \geq 2} T_{n-1} p^n.
$
In a subsequent work we will verify that other statistics interpolating Bose-Einstein statistics
and Fermi-Dirac statistics are $\phi_\bT$-exponential functions and
are critical point of $\phi_\bT$-entropy functions.
Furthermore,
any deformation of the Boltzmann-Gibbs-Shannon entropy in some suitable sense
can be obtained as a $\phi_\bT$-entropy.
We will use such considerations to establish a connection with string theory.
More precisely,
we will show that some computations in string theory can be used to generate interpolating
statistics.

\end{document}